\documentclass[review,  12pt,3p,authoryear,nopreprintline]{elsarticle}

\usepackage{mathtools,bm,amssymb,amsthm}
\usepackage[hidelinks]{hyperref}
\usepackage{wrapfig}
\usepackage{subcaption}
\usepackage{centernot,cancel}
\usepackage[inline]{enumitem}

\newcommand{\secref}[1]{Section~{\ref{#1}}}
\newcommand{\appref}[1]{{\ref{#1}}}
\newcommand{\figref}[1]{Figure~\ref{#1}}

\newcommand{\ddn}[2]{\frac{{d}^{#1}}{{d}x^{#1}} #2}
\newcommand{\ddna}[2]{\left.\ddn{#1}{#2}\right|_{x=a}}
\newcommand{\dd}[1]{\frac{d}{d{#1}}}

\newcommand{\ddy}{\dd{y}}
\newcommand{\bmtheta}{{\bm{\theta}}}

\newtheorem{thm}{Theorem}

\newcommand{\thmref}[1]{Theorem~\ref{#1}}

\newcommand{\lit}{LIT}

\newtheorem{corollary}{Corollary}
\newcommand{\qinfty}{\xrightarrow{q\to\infty}}
\newcommand{\kinfty}{\xrightarrow{k\to\infty}}

\newcommand{\pwk}{\xrightarrow{pw}}

\newcommand{\M}{{M}}

\newcommand{\st}{~\text{s.t.}~}

\begin{document}
	
\begin{frontmatter}
\date{}
\title{ Model Selection and Parameter Inference through Constraints \emph{via} Sequences of Surrogate Smoothing Functions\tnoteref{t1}}
\tnotetext[t1]{This research was supported by NSERC Discovery Grant RGPIN-2018-06787}

\author[1]{Mateen R Shaikh\corref{cor1}}
\ead{mshaikh@tru.ca}

\address[1]{Department of Mathematics and Statistics, Thompson Rivers University, Kamloops BC, Canada}

\cortext[cor1]{Corresponding author}

\begin{abstract}
	Models with fewer parameters are often easier to interpret and more robust. Parsimony can be achieved through optimizing objectives like the AIC or BIC, which are functions of the the number of free parameters in the model. Optimizing this discrete objective is a challenge, often relying on discrete optimization. We construct smooth functions with optima that reach the same optima of these objectives but permit continuous rather than discrete optimization, relieving some selection burden. Proofs of convergence are provided and a novel method of clustering through explicit overparamterization shows promising results.
\end{abstract}

\begin{keyword}
	Parameter Constraints, Variable Selection, Dimension Reduction, Discrete Regularization
\end{keyword}

\end{frontmatter}

	
	\section{Introduction}
	
	Model and variable selection arises in a variety of contexts and are often computationally expensive. Consider the usual objective of minimizing the Akaike Information Criterion 
	\citep[$AIC=-2\ell +2p $;][]{akaike74} 
	or the Bayesian Information Criterion  
	\citep[$BIC = -2\ell + p\log n$;][]{schwartz78} 
	where $\ell$ is the log-likelihood, $n$ is the number of observations and $p$ is the number of free parameters in the model. Imposing different combinations of constraints on parameters reduces the free parameter count by varying amounts, yielding a family of models to select from. The optimal combination of constraints is often chosen discretely, and expensively, by separately fitting each combination. 

	A popular alternative to the AIC/BIC is cross-validation (CV), e.g.,  it is commonly used in code for LASSO \citep{tibshirani96} variants. CV repeatedly fits subsets of the data and use the remaining data, using prediction to quantify model fitness. This is also computationally expensive  but relies more on data than assumptions to balance model fit and complexity, which makes a return to the AIC/BIC attractive. These approaches are consonant. Previous work has shown that the AIC is asymptotically equivalent to leave-one-out CV \citep{stone77} through a Taylor expansion. The BIC shown to be asymptotically consistent and equivalent to leave-$b$-out CV where $b= n-n[\log n-1]^{-1}$ \citep{shao97}. A generalized form of the AIC/BIC for regression, $-2\ell + c_n p$, yields consistent estimators when $c_n{\xrightarrow{n\to \infty}}\infty$ and when $\frac{c_n}{n}{\xrightarrow{n\to \infty}} 0$  \citep{rao89} but is not necessarily equivalent to any form of CV.

	\subsection{Smoothing Discrete Objectives}\label{sec:contdiscreteobj}
	\noindent Rather than fitting many different models discretely, the LASSO  adds the continuous but non-differentiable $L_1$ penalty with estimates intentionally shrunk to 0, and hence biased. It zeroes coefficients when the optimum lies on the penalty's cusp. The Smoothly Clipped Absolute Deviation penalty \citep{fan2001} removes some bias with a piecewise, non-differentiable penalty. These approaches contrast other purely continuous approaches requiring separate posthoc inference, which can induce additional inferential problems such as multiple testing.
	
	\citet{kreimer_rubinstein_1992} used smoothing on non-differentiable objectives. \citet{nesterov2005smooth} provided proofs with uniform convergence in optimization when smoothing over regions of convexity and \citet{chen_2012} provide properties of the smoothing functions themselves, including the case where the solution space is known to be convex but the objective is continuous and non-convex. None of these apply to discontinuous functions such as AIC/BIC. Our proposed methodology will fill this gap by applying to such objectives.
	
	To compute, the approach will invoke The Lagrange Inversion Theorem \cite[\lit;][]{lagrange88, whittaker2020course}, which has been applied in Statistics for saddlepoint approximations\citep{daniels1954,jensen1995}, quantile estimation \citep{cornish1937,fisher1960} and distributions of order statistics \citep{hall1992}. 
	
	The next Section provides the proof and explains how discontinuities from parameter counts may be approximated. \secref{sec:examples} provides two illustrative examples. The first provides a visualization of the surrogate objectives. The second discusses an application that intentionally overparamterizes the problem with successful results in a novel approach to model-based clustering. \secref{sec:discussion} discusses ongoing and future work with the proposed methodology.

	\section {Methodology}
	
	We propose constructing well-behaved surrogate functions that provides function values that approach those of a specified, poorly-behaved objective. The surrogate is then deterministically perturbed to be closer to the true objective, using the optimum from the previous iteration as a seed to estimate the updated surrogate's optimum. An algorithmic parameter we call $k$ indicates how close the function is to the true objective, with $k\to\infty$ yielding the true objective. The reciprocal of $k$ serves a similar function to temperature in simulated annealing. At each iteration, complete optimization of the surrogate is not necessary. The iterate must simply continue to fall anywhere within a narrowing radius of convergence to weather the increasingly turbulent surrogate functions. We use analysis terminology including  smooth (specifically $C^{m}$), compact, interior, and pointwise convergence, denoted $\pwk$.

	\subsection{Derivations}
	\begin{thm}[Lagrange Inversion Theorem; \citealt{lagrange88}]\label{lit}
		Let $y=f(x)\in C^{\infty}(\mathcal{X})$ have a root at $x=A$. Let $a\in \mathcal X \st f'(a)\neq 0$. Then, $A$ may be estimated by
		\begin{align*}			A &=  a+\sum_{i=1}^{\infty} \left(-\frac{1}{i!}\frac{f(a)}{f_1(a)}\right)^i \ddna{i}{\frac{1}{f_1(x)}}\\
		\quad &= a-\left(\frac{f(a)}{f_1(a)}\right)+ \frac{1}{2!}f_2\left(\frac{f(a)}{f_1(a)}\right)^2-\frac{1}{3!}(3f_2(a)^2-f_1f_3(a))\left(\frac{f(a)}{f_1(a)}\right)^3 +\dots&& \label{grunge}
		\end{align*}
		where $f_j(a)=  \ddna{i}{f(x)} $, the $j^\text{th}$ derivative of $f$ evaluated at $x=a$.
	\end{thm}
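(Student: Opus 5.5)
I will prove the statement in its classical analytic form and then read off the displayed low-order terms. As printed, the hypothesis $f\in C^\infty(\mathcal X)$ is not enough: a $C^\infty$ function need not equal its Taylor series, so the series for $A$ need not converge to $A$. I will therefore assume that $f$ is real-analytic in a neighbourhood of $a$, extend it holomorphically to a complex disc about $a$, and assume the root $A$ is close enough to $a$ that $0$ lies in the disc of convergence of the local inverse of $f$ about $b:=f(a)$. Under these hypotheses the series converges to $A$.

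\textbf{Step 1 (local inverse).} Since $f_1(a)=f'(a)\neq 0$, the holomorphic inverse function theorem gives a holomorphic inverse $g$ with $g(b)=a$, defined on a disc $|w-b|<\rho$. Hence $A=g(0)=\sum_{n\ge 0} g^{(n)}(b)(-b)^n/n!$ provided $|b|=|f(a)|<\rho$, which is exactly the closeness assumption above.

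\textbf{Step 2 (coefficients).} Compute $g^{(n)}(b)$ for $n\ge1$ by Cauchy's formula applied to $g'$:
\begin{align*}
\frac{g^{(n)}(b)}{(n-1)!}=\frac{1}{2\pi i}\oint \frac{g'(w)}{(w-b)^{n}}\,dw .
\end{align*}
Substitute $w=f(z)$, so that $g'(w)\,dw=dz$. The integral becomes $\frac{1}{2\pi i}\oint dz/(f(z)-b)^n$ around a small circle about $a$, which equals the residue at $z=a$. Because $f(z)-b=(z-a)h(z)$ with $h(a)=f_1(a)\neq0$, this residue is
\begin{align*}
\frac{1}{(n-1)!}\ddna{n-1}{\left(\frac{x-a}{f(x)-f(a)}\right)^{n}} .
\end{align*}
This yields the standard form
\begin{align*}
A=a+\sum_{n\ge1}\frac{(-f(a))^n}{n!}\,\ddna{n-1}{\left(\frac{x-a}{f(x)-f(a)}\right)^{n}} .
\end{align*}

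\textbf{Step 3 (expansion in the $f_j$).} Write $\frac{x-a}{f(x)-f(a)}=\bigl(f_1+\tfrac{1}{2}f_2(x-a)+\tfrac{1}{6}f_3(x-a)^2+\cdots\bigr)^{-1}$, with all $f_j$ evaluated at $a$. Expanding its powers to low order gives the first terms:
\begin{align*}
A=a-\frac{f}{f_1}-\frac{f_2 f^2}{2f_1^3}+\frac{(f_1f_3-3f_2^2)f^3}{6f_1^5}+\cdots
\end{align*}
I will then compare this with the displayed expression. The printed signs and prefactors, and the expression $\frac{d^i}{dx^i}\frac{1}{f_1}$, do not literally agree with this expansion; the term-by-term comparison shows they should be read as the standard Lagrange coefficients above.

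\textbf{Main obstacle.} The substantive difficulty is not the algebra of Steps 2 and 3 but convergence: showing that $|f(a)|$ lies within the radius of the inverse series. I would handle this with a quantitative inverse function bound. On a disc $|z-a|\le r$ on which $|f(z)-f(a)|\ge m\,|z-a|$, Rouché's theorem shows that $g$ is holomorphic for $|w-b|<mr$. So the series converges whenever $|f(a)|<mr$, which in particular holds when the seed $a$ is sufficiently close to the root $A$. This matches how the result is used later in the paper, namely as an update from a nearby seed.
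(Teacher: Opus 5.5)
Your argument is correct for the repaired statement you formulate, and the repair is genuinely needed. It shares the paper's skeleton. The paper's whole proof is the remark that $A=g(0)$ with $g=f^{-1}$ Taylor-expanded about $y=f(a)$, using $g'(f(a))=1/f'(a)$. For the higher coefficients it implicitly iterates the chain rule, $g^{(n)}(f(a))=\bigl[\bigl(\tfrac{1}{f'(x)}\tfrac{d}{dx}\bigr)^{n-1}\tfrac{1}{f'(x)}\bigr]_{x=a}$.

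What you do differently is supply everything the paper leaves out. You obtain the coefficients from Cauchy's formula and a substitution that reduces them to a residue at $a$, which gives the closed Lagrange--B\"urmann form. You also prove convergence, via real-analyticity plus a Rouch\'e bound making $g$ holomorphic on $|w-b|<mr$; the paper never addresses convergence.

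The paper's real-variable route is more elementary. It needs only $C^{n}$ to produce the $n$-th order Taylor polynomial of $g$, which is what the truncated root-finding steps later in the paper actually use. But it cannot deliver the infinite-series equality, which is false under the stated $C^\infty$ hypothesis: adding to $g$ a function flat at $b$ changes $g(0)$ but not the Taylor series at $b$. Your route pays for analyticity and in return gets closed-form coefficients and an explicit radius.

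Your Step 3 is also right that the displayed expansion is wrong as printed. The second-order term should be $-f_2f^2/(2f_1^3)$, and the third-order term has $6f_1^5$ rather than $6f_1^3$ in its denominator.

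One small correction to your final paragraph. The claim that $|f(a)|<mr$ holds once $a$ is close enough to $A$ needs $f'(A)\neq0$. When $|f(a)|<mr$, Rouch\'e gives exactly one zero of $f$ in $|z-a|<r$, and it is simple. So for a multiple root your criterion can never certify $A$; for instance, with $f(x)=x^2$ and $A=0$, every admissible pair satisfies $mr\le a^2=|f(a)|$. You should state $f'(A)\neq0$ explicitly. You should also state $|A-a|<r$, so that the limit $g(0)$ is identified with $A$ rather than with some other root of $f$.
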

	\begin{proof}
		Let $g(y)=f^{-1}(y)$, the inverse of $f(x)$. The proof follows as a Taylor series of $g(y)$ about $y=f(a)$ recalling that $\left.\ddy g(y)\right|_{y=f(a)}=\frac{1}{f'(a)}$ and substituting $y=0$.
	\end{proof}
	 \appref{app:finite} provides a proof of a corollary that the series may be truncated for arbitrarily precise finite approximations to estimating roots. This is useful in the main contribution, discussed in the next theorem. 
Although the theorem's antecedent may test the reader's patience, it's restrictions are curated to still easily facilitate Statistical optimization tasks, as demonstrated in the next section. The reader may refresh on these definitions in \appref{app:definitions}. Below, we change derivative notation to $M'(\theta)$, $M'_k(\theta)$, etc.,  denoting the usual derivatives with respect to argument $\theta$. This highlights that the argument and points of evaluation are the focus in the theorem statement and  proof, rather than the variable of differentiation, which remains $\theta$  throughout.

\begin{thm}[Optima Convergence  \emph{via} Sequences of Smooth Surrogate Functions]\label{thm}
	
	Let $\M(\theta)$ be defined for all $\theta$ in compact support $\Theta$, and  let ${\theta}^*\in\Theta$ represent the ordinate of a minimum(maximum) of $\M(\theta)$.  For $k>0$, consider any $\M_k(\theta)\in C^m(\Theta) \st  \M_k(\theta)~\pwk~\M(\theta), \forall \theta\in\Theta$. Then,
	for every compact $\Theta_k\subseteq\Theta$ with the all three of the following properties:
	\begin{enumerate*}[label=(\roman*),ref=(\roman*)]
		\item $\theta^*\in \Theta_k$; \label{item:includesoptimum}
		\item $\forall \theta \in \Theta_k, M_k(\theta)$ is strictly convex (concave); and   \label{item:convex}
		\item $M_k(\theta)$ has min(max), say $\theta^*_k$ in the interior of $\Theta_k$: $\theta^*_k\in\Theta_k^\circ$, \label{item:singleoptima}
	\end{enumerate*}%
the following three consequents follow: 
		\begin{enumerate*}[label=(\alph*),ref=(\alph*)]
		\item  $M'_k(\theta_k^*)=0$; \label{item:derivativezero}
		\item $\theta_k^*$ may be obtained through continuous optimization requiring the objective is no more smooth than $C^{m}$\label{item:ctsoptimization}; and	
		\item  $\theta^*_{k}\kinfty \theta^*$. \label{item:optimumconverges}
	\end{enumerate*}	
\end{thm}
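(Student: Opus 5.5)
The plan is to handle the three consequents in order: \ref{item:derivativezero} and \ref{item:ctsoptimization} are essentially local statements about a single $k$, while \ref{item:optimumconverges} is the asymptotic part and the real difficulty. I treat the minimization case; the maximization case follows by applying the argument to $-\M_k$ and $-\M$.

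For \ref{item:derivativezero}, I fix $k$ and use property \ref{item:singleoptima}: $\theta_k^*$ is a minimizer of $\M_k$ over $\Theta_k$ lying in $\Theta_k^\circ$, and $\M_k\in C^m$ with $m\ge 1$. Fermat's interior-extremum theorem then gives $\M_k'(\theta_k^*)=0$ directly. Strict convexity on $\Theta_k$ (property \ref{item:convex}) additionally makes $\theta_k^*$ the unique minimizer there, which removes any ambiguity in the definition of the sequence $\theta_k^*$.

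For \ref{item:ctsoptimization}, I apply \thmref{lit} with $f=\M_k'$, so that $\theta_k^*$ is the root $A$. I need $m\ge 2$ (really $C^\infty$ locally, or the finite-truncation corollary of \appref{app:finite} for $C^m$) and $f'(a)=\M_k''(a)\neq 0$. Strict convexity gives $\M_k''\ge 0$ on $\Theta_k$; to obtain the strict inequality near the root I would either assume it or argue on the open set where $\M_k''>0$. Truncating the series at first order gives the Newton step $a-\M_k'(a)/\M_k''(a)$. Since $\theta_k^*$ is the unique stationary point of a strictly convex function on $\Theta_k$, iterating from a seed inside $\Theta_k^\circ$, namely the previous optimum $\theta_{k-1}^*$, yields a purely continuous scheme that needs only the $C^m$ derivatives of $\M_k$. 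Compactness of $\Theta_k$ bounds the derivatives, which is what keeps the truncation error in \appref{app:finite} controlled.

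For \ref{item:optimumconverges}, I use a standard argmin-consistency argument. Because $\Theta$ is compact, every subsequence of $\theta_k^*$ has a further subsequence converging to some $\bar\theta\in\Theta$. By property \ref{item:includesoptimum}, $\theta^*\in\Theta_k$, so
\[
\M_k(\theta_k^*)\le \M_k(\theta^*)\kinfty \M(\theta^*).
\]
I then want to show
\[
\M(\bar\theta)\le\liminf_{k}\M_k(\theta_k^*),
\]
which would make $\bar\theta$ a minimizer of $\M$. If $\theta^*$ is the unique minimizer, every subsequential limit equals $\theta^*$, and the whole sequence converges. The main obstacle is precisely this liminf inequality, because pointwise convergence alone does not give it: the surrogate could dip sharply at moving points.

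To get around this, I would first try to use the structure built into the statement, namely that the compact sets $\Theta_k$ each contain both $\theta^*$ and $\theta_k^*$. If the construction forces $\operatorname{diam}\Theta_k\to 0$ (the ``narrowing radius'' described earlier in this section), then
\[
|\theta_k^*-\theta^*|\le\operatorname{diam}\Theta_k\to0
\]
and \ref{item:optimumconverges} is immediate. Failing that, I would strengthen pointwise convergence to epi-convergence, or to uniform convergence on compacts via equicontinuity of $\{\M_k\}$, and argue as above. I would state explicitly that either this extra hypothesis or a unique minimizer $\theta^*$ is being used.
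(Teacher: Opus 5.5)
\paragraph{A gap in (c): convexity is never used.} Your part (a) is fine. In (c), however, you never derive the conclusion from the stated hypotheses. You stall at $\M(\bar\theta)\le\liminf_k \M_k(\theta_k^*)$ and then import assumptions the theorem does not give you:
\begin{itemize}
\item shrinking $\Theta_k$, which is not available, since the statement quantifies over \emph{every} admissible $\Theta_k$;
\item epi-convergence;
\item equicontinuity.
\end{itemize}
The missing idea is hypothesis (ii), which your argument for (c) never uses. Convexity is exactly what turns pointwise convergence at \emph{fixed} points into control of the moving point $\theta_k^*$.

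Take the one-dimensional setting the theorem is written in, with $\Theta_k$ an interval. Suppose $\theta_k^*\ge\theta^*+\epsilon$ along a subsequence, and set $\delta=\theta^*+\epsilon$. Then $\delta\in[\theta^*,\theta_k^*]\subseteq\Theta_k$, so write $\delta=\lambda\theta^*+(1-\lambda)\theta_k^*$. Convexity, together with $\M_k(\theta_k^*)\le \M_k(\theta^*)$ (which holds by (i)), gives $\M_k(\delta)\le \M_k(\theta^*)$. Both points are now fixed, so pointwise convergence yields $\M(\delta)\le \M(\theta^*)$. Thus $\delta\neq\theta^*$ is another minimizer.

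So in 1-D the only hypothesis you need to add is uniqueness of $\theta^*$, and it genuinely must be added. For example, take $\M\equiv 0$ on $[-2,2]$, $\theta^*=0$, $\M_k(\theta)=(\theta-1)^2/k$ and $\Theta_k=[-2,2]$. This satisfies (i)--(iii) with $\theta_k^*=1$ for every $k$, so (c) fails.

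The paper's contradiction argument does not supply this step either. Its final case, that $\M_k(\theta_k^*)\to m_0\neq \M(\theta^*)$ ``violates pointwise convergence,'' is precisely the evaluation-at-a-moving-point step you correctly flagged as unjustified.

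\paragraph{Where your extra hypothesis is genuinely needed, and a note on (b).} In dimension $\ge 2$ the segment trick fails, because the intermediate point $\theta^*+\epsilon(\theta_k^*-\theta^*)/|\theta_k^*-\theta^*|$ moves with $k$. There (c) is false even with a unique minimizer, so your instinct that an extra hypothesis is required is right in that case. Here is a counterexample:
\begin{itemize}
\item on the disc $|\theta|\le 2$, take $\M(\theta)=|\theta|^2$;
\item let $e_k=(\cos(1/k),\sin(1/k))$ and $e_k^{\perp}=(-\sin(1/k),\cos(1/k))$;
\item let $\Theta_k=\{-k^{-2}\le\theta\cdot e_k\le 3/2,\ |\theta\cdot e_k^{\perp}|\le k^{-2}\}$;
\item let $\M_k=|\theta|^2-2(\theta\cdot e_k)\chi_k$, where $\chi_k$ is a smooth cutoff equal to $1$ on $\Theta_k$ and $0$ off its $k^{-2}$-neighbourhood.
\end{itemize}
On $\Theta_k$ we have $\M_k=(\theta\cdot e_k-1)^2+(\theta\cdot e_k^{\perp})^2-1$, which is strictly convex with interior minimizer $e_k$. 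Every fixed $\theta\neq 0$ eventually leaves the support of $\chi_k$, so $\M_k\to\M$ pointwise. Yet $\theta_k^*=e_k\to(1,0)$.

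Separately, your route to (b) claims more than it delivers. Strict convexity does not give $\M_k''>0$. Newton/LIT iteration can also diverge on strictly convex functions: $\M_k=\sqrt{1+\theta^2}$ gives the Newton map $\theta\mapsto-\theta^3$. A clean argument needing only $C^1$ is bisection on $\M_k'$. This derivative is strictly increasing on $\Theta_k$ and changes sign at the interior point $\theta_k^*$.
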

\begin{proof}
			The circumscribed wording of the theorem statement makes the proof simple.  Since $\Theta_k$ is also compact, $M_k\in C^m(\Theta) \Rightarrow M_k\in C^m(\Theta_k)$. Also note that because of convexity (property \ref{item:convex}) for $M_k$, and $\theta^*_k\in\Theta_k^\circ$ (property \ref{item:singleoptima}), $M'_k(\theta)=0$ iff $\theta=\theta_k^*$, 
			resulting in \ref{item:derivativezero}. The extreme value theorem guarantees the optimum exists and smoothness permits continuous optimization which may exploit $C^m$, satisfying consequent \ref{item:ctsoptimization}.
			
			We show consequent \ref{item:optimumconverges} through contradiction. Consider when the three properties are satisfied but $\theta_k^*\centernot{\pwk} \theta^*$. If $\exists \theta_0 \st \theta_k\pwk\theta_0\neq \theta^*$, then either the function obtains another min(max) in the interval, contradicting  \ref{item:singleoptima}; or $M'_k(\theta_0)=0$ as a saddlepoint or max(min), contradicting \ref{item:convex}; or $\theta_k$ cannot reach $\theta^*$ because it lies outside the interval, contradicting \ref{item:includesoptimum}; or  
			$M_k(\theta_k^*)\pwk m_0\neq M(\theta^*)$ for some $m_0$,
			violating the pointwise convergence assumption. 
%
%
\end{proof}
As mentioned above, this does \emph{not} require that limit $\M(\theta)$ is differentiable, or even continuous, like AIC/BIC. Uniform convergence, which makes the result obvious, would require that the objective is continuous if the terms of the sequence are continuous, hence disqualifying those useful Statistical objectives.
This makes \thmref{thm} useful in Statistical optimization. Also note that smoothness only relevant for continuous optimization, otherwise continuity ($C^0$) suffices.

\subsection{Implementation for Parsimony}

In a model considering up to free $q$ parameters, we define the number of selected non-zero parameters $p=\sum_{j=1}^q \mathbb{I}(\theta_j\neq0)=q-\sum_{j=1}^q\mathbb{I}(\theta_j=0)$. We make a sequence of surrogate functions for the latter indicator ${d}_k(\theta_j)$ that approaches $\mathbb{I}(\theta_j=0)$. 
The objective to optimize, say for $AIC$, is given by $AIC=-2\ell(\bmtheta)+2p$. Decomposing the indicator count, we obtain
	\begin{alignat}{4}
			AIC= & -2\ell\phantom{_j}(\bmtheta)&&+2\sum 1-{\mathbb{I}({\theta_j=0}}).\\
			\shortintertext{Employing the (see below) surrogate smoother $d_k(\theta_j)\pwk\mathbb{I}({\theta_j=0})$, we construct }
			AIC_k=&-2~\ell\phantom{_j}(\bmtheta) +2q&&-2\sum d_k(\theta_j)\label{eqn:aick}\\
			\frac{\partial}{\partial \theta_j}AIC_k=&-2S_j(\bmtheta)\phantom{+2q}&&-2\sum d_k'(\theta_j)\label{eqn:scorek}\\
			\frac{\partial^2}{\partial \theta_j^2}AIC_k=&-2S'_j(\bmtheta)\phantom{+2q}&&-2\sum d_k'(\theta_j)\label{eqn:scorekk}\\[-\baselineskip]
			\vdots\notag
	\end{alignat}
	where $S_j(\bm\theta)=\frac{\partial}{\partial\theta_j}\ell(\bmtheta)$, is the usual score function for $\theta_j$.
	
	First note we choose some $d_k(x)$ that approaches the delta function, such as  sech$(kx)$, $\text{exp}(-\frac{1}{2}kx^2)$, $[2(1+x)^2]^{-1}$ etc.. These functions are all similarly shaped and share the property that derivatives may be written as functions of the original, which aids computation. Secondly, note that we only provide univariate derivatives as higher order derivatives requiring matrices and tensors become burdensome so simple univariate approximations are considered. 
	To obtain the optimum of \eqref{eqn:aick}, we estimate the root of Equation \eqref{eqn:scorek}. Optima may be found through a descent-style root-solving algorithm such as the truncated polynomial of the \lit, or Fisher scoring by substituting $S'_j$ with either Fisher or Observed information of $\theta_j$.

	Following the fuse approach\citep{tibshirani05}, parameters may be penalized for  differing from each other rather than differing from a constant, inducing a parameter clustering problem in the broader learning problem. The surrogate for $p$ is $1+\sum_{j=2}^q d_k(\theta_{(j)}-\theta_{(j-1)})$ where $\theta_{(j)}$ denote ordering with index $j$. Surrogates of the free parameter count are shown below. 
	\begin{align}
		p   & = 1+\textstyle \sum_{j=2}^q \mathbb{I}(\theta_j-\theta_{j-1}=0)\\
		p_k & = 1+\textstyle \sum_{j=2}^q d_k(\theta_j-\theta_{j-1}=0) \label{eqn:pk} \\
		\frac{\partial}{\partial \theta_{(j)}}p_k & = 0+\begin{cases}
			\phantom{d'_k(\theta_{(p)}-\theta_{(p-1)})}-d'_k(\theta_{(2)}-\theta_{(1)}) & j=1\\
			d'_k(\theta_{(j)}-\theta_{(j-1)})-d'_k(\theta_{(j+1)}-\theta_{(j)}) & 2\leq j \leq {q-1}\\
			d'_k(\theta_{(q)}-\theta_{(q-1)}) & j=q
		\end{cases}
		\shortintertext{We slightly abuse notation by extending parameter indices to define $\theta_{(0)}$ and $\theta_{(p+1)}$, which are never variables of differentiation, to any constants, so that $d_k'$ exists to be multiplied by zero for the two boundary cases of $j=1$ and $j=q$. This simplifies the derivative of $p_k$ as}
			\frac{\partial}{\partial \theta_{(j)}}p_k &=\qquad \mathbb{I}(j\geq2)\quad d'_k(\theta_{(j)}-\theta_{(j-1)})\qquad -\qquad \mathbb{I}(j\leq q-1)\quad   d'_k(\theta_{(j+1)}-\theta_{(j)})
	\end{align}
	\vspace{-1em}
	with further derivatives simply following the same pattern.

	\section{Examples}\label{sec:examples}
	\subsection{AIC surface}

	Our first  example demonstrates the solution path incrementing $k$ in a toy data set on simple linear regression designed to illustrate that the least squares solution provides a minimum slightly inferior to the null model. We use $p_k=\text{sech}(kx)$ as surrogate in the count indicator. The surrogate AIC curves and path of the optima are shown in \figref{fig:aicreg}. For  $k>20$, each of the two optima are reachable given different starting values empoying the LIT. Importantly, there was no guarantee that the optimum for the small $k$ would follow a path that led to the global optimum of the AIC and selecting surrogates to manipulate the path is left for future work.

			\newcommand{\figwidth}{0.25\textwidth}
\setlength{\intextsep}{-0\baselineskip}%
\setlength{\columnsep}{10pt}
\begin{wrapfigure}{r}{\figwidth}
	\vspace{-3.5em}
	\includegraphics[width=\figwidth]{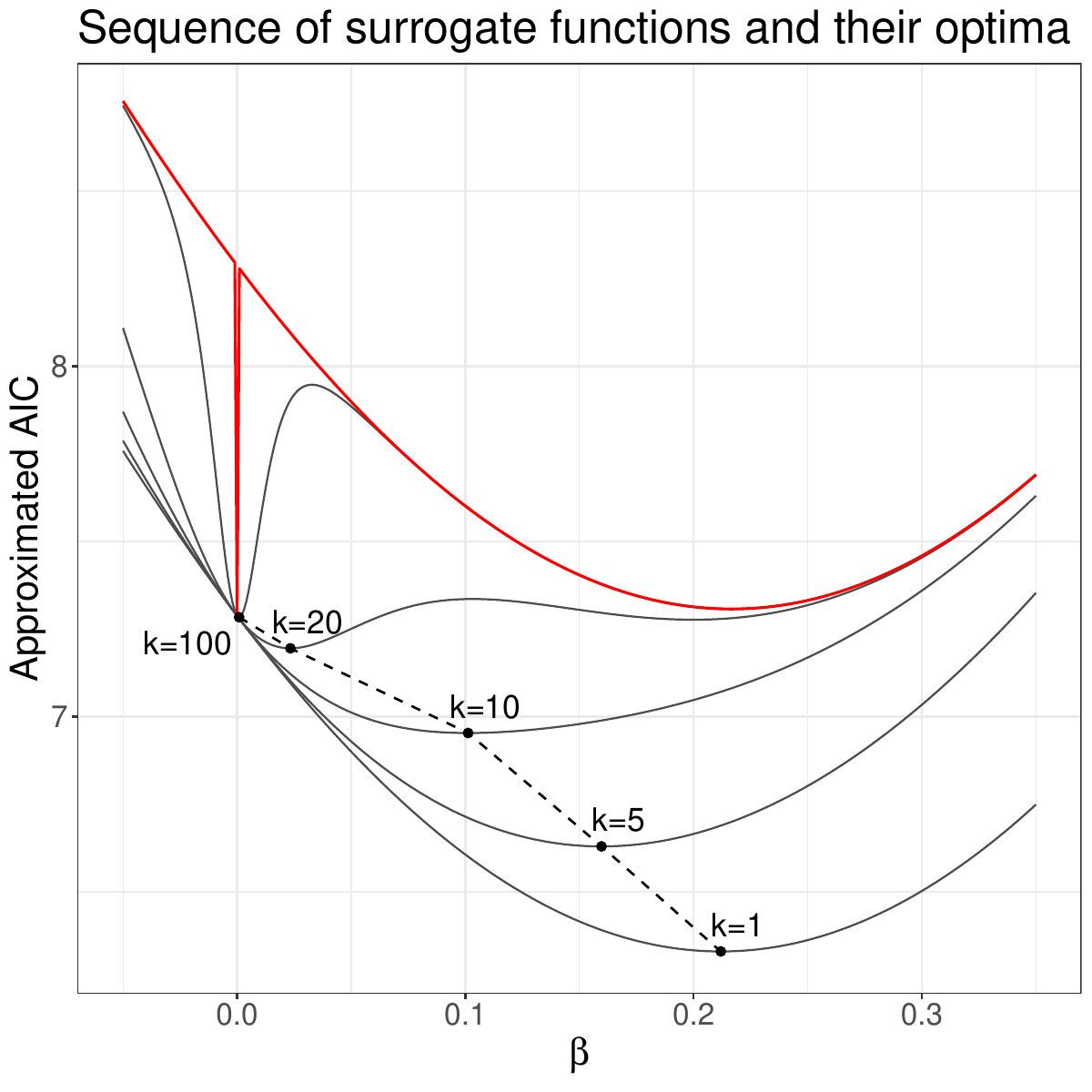}
	\caption{ The sequence of optima (blue) for surrogates (black) that approach the objective AIC (red).\label{fig:aicreg}}
\end{wrapfigure}

	\subsection{Parameter Constraints via Overparameterization}
	Our second example induces a novel method of model-based clustering. Families of finite mixture models are often used for this task but have two notable challenges. The first is that unconstrained high-dimensional models have a glut of parameters in multiplicity for each mixture component, which can lead to overparameterization. The second is that the number of components must be explicitly specified, making the ideal number a further discrete optimization task. Remarkably, neither of these are challenge when employing \thmref{thm}. We illustrate with the \texttt{faithful} \citep{azzalini1990} data set which provides bivariate observations of the duration of a geyser's eruption and the duration of the wait until its next eruption. Although the scatterplot suggest skew, we only consider the Gaussian to keep the example short. We also subset to every fifth observation to make observations distinguishable in the Figures.
	Rather than a bivariate Gaussian, we more easily illustrate clustering through separate of univariate Gaussians which provide similar results to a bivariate Gaussian with identity covariance. 
	Optimizing the surrogate $BIC$ again employing $d_k(x)=\text{sech}(kx)$ as the indicator's surrogate results in solution paths in $k$ for each of the mean parameters corresponding to the eruption and waiting time for each observation. These are illustrated in 
	\figref{fig:clusteringresults}.  As mentioned, the number of groups was not specified. Because the parameters fuse automatically, the fusions implicitly define univariate groups. Across all components, when the estimated mean vectors for observations were completely identical, the corresponding observations were declared to be in the same clustering group. A peculiar observation is apparent in \figref{fig:clustering}: one observation's mean eruption time identified with group 1 but its mean waiting time identified with group 2. This was indicated in the Figure as a ``Split'' decision. This illustrates a limitation  of applying the clustering as through independent univariate distributions. 

		\begin{figure}[!b]
	\vspace{-1em}
	\hspace{-2em}
	\begin{subfigure}{0.32\textwidth}
		\includegraphics[width=\textwidth]{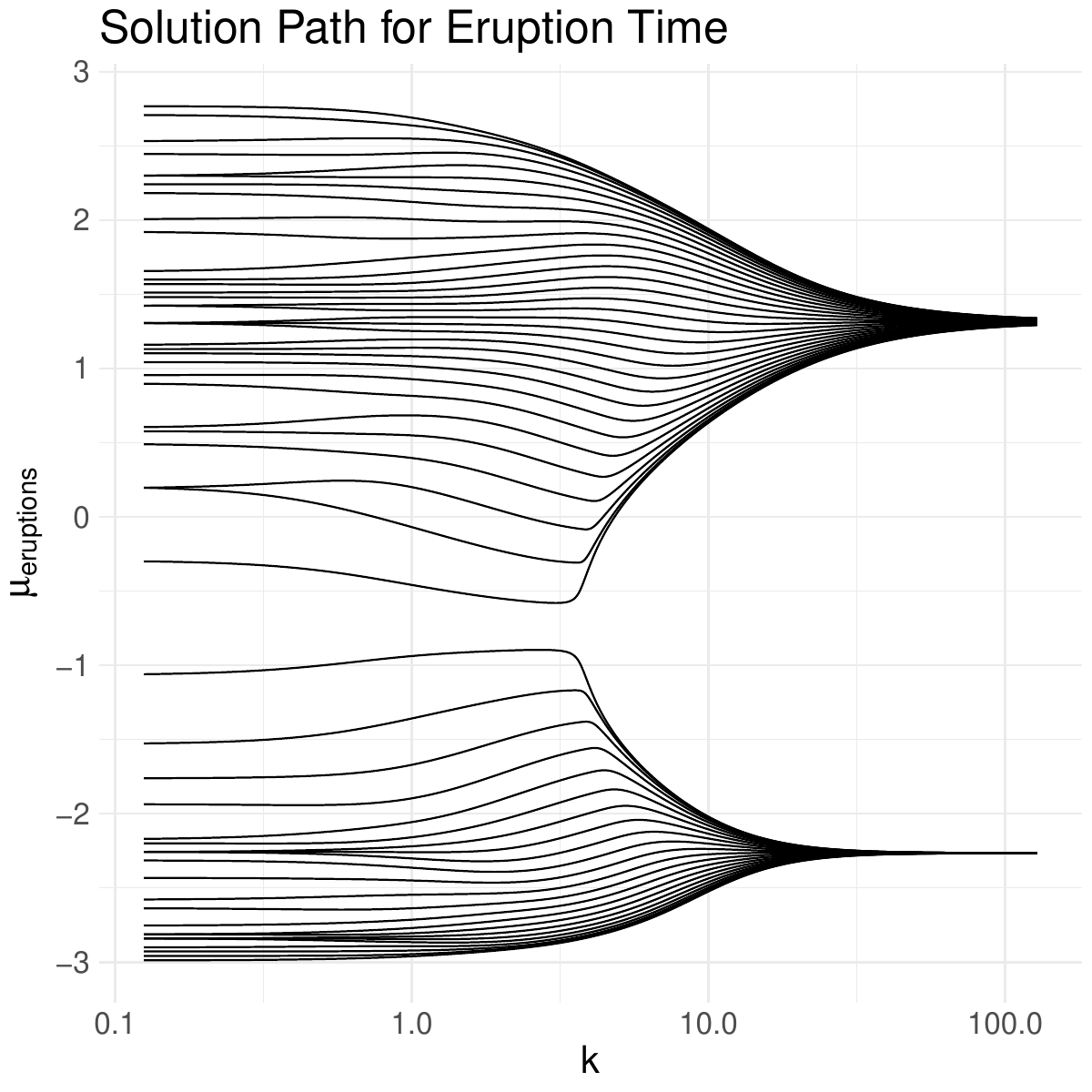}
		\caption{Solution paths of means corresponding to eruption times, over $k$.}
		\label{fig:eruptions}
	\end{subfigure}
	\begin{subfigure}{0.32\textwidth}
		\includegraphics[width=\textwidth]{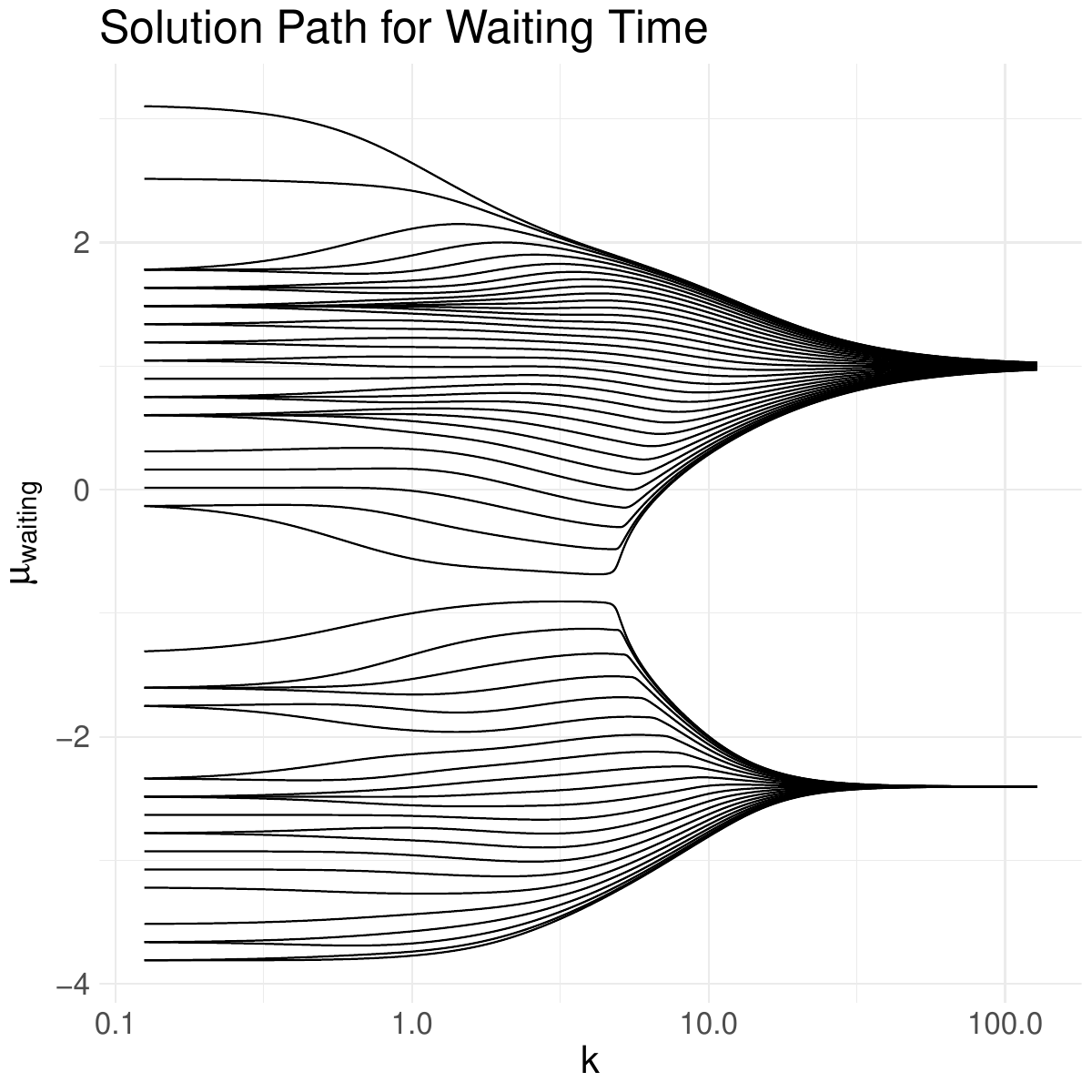}
		\caption{Solution paths of means corresponding to waiting times, over $k$.}
		\label{fig:waiting}
	\end{subfigure}
	\begin{subfigure}{0.42\textwidth}
		\includegraphics[width=\textwidth]{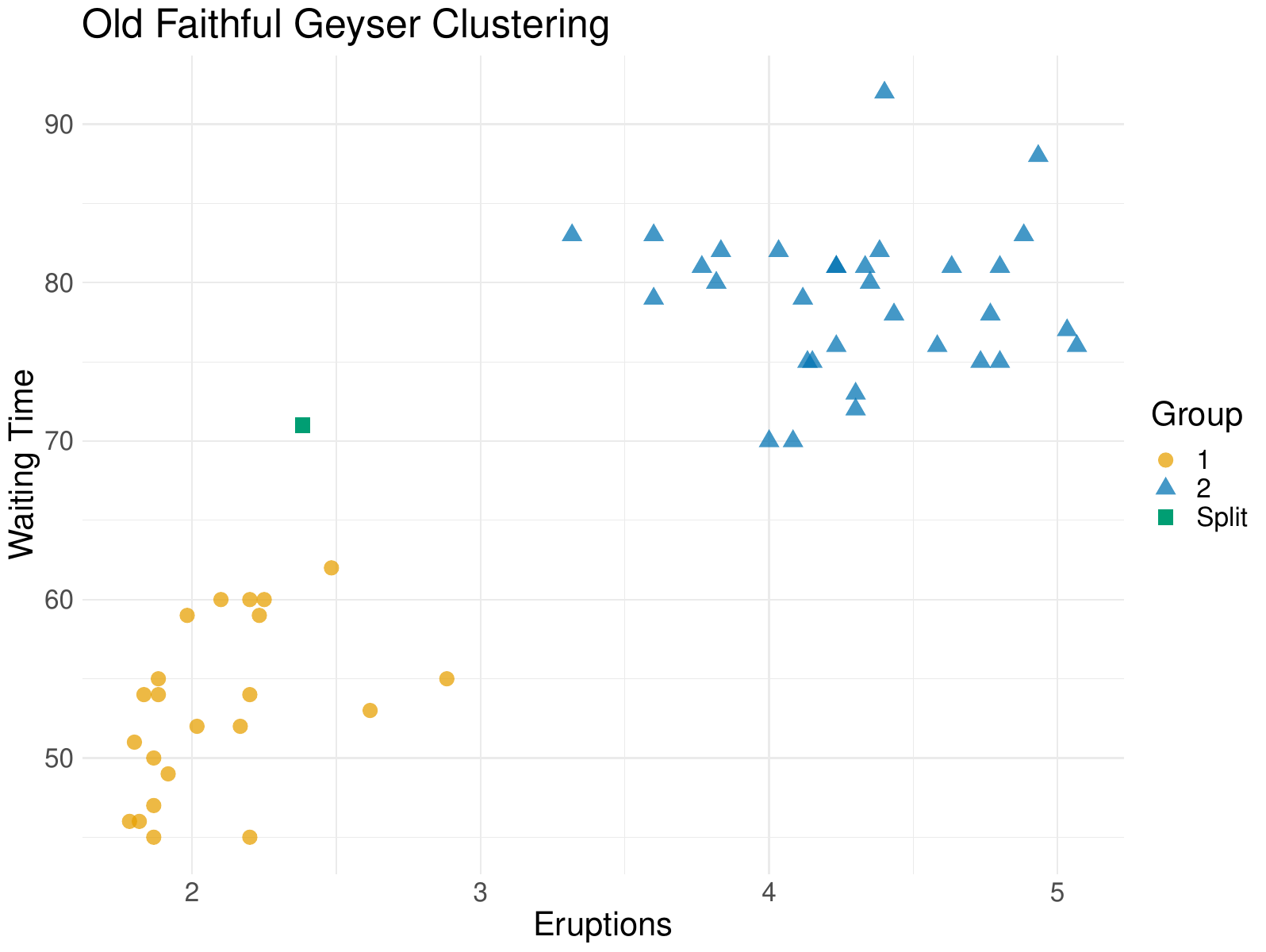}
		\caption{Clustering results. Groups are identified when all coordinate of the means are identical.}
		\label{fig:clustering}
	\end{subfigure}
	\caption{Results from employing sequences of surrogate functions to a subset of the faithful data.}
	\label{fig:clusteringresults}
\end{figure}

	\section{Discussion}\label{sec:discussion}

	We derived a method of optimizing discontinuous objectives such as the AIC/BIC through sequences of smooth surrogate functions, and proved its convergence. Application shows promise with some caveats spawning future work. One is that there is no guarantee that sequences converges to a global optimum but manipulating the surrogates may make it more likely.
	In higher-dimensional scenarios, simply ordering the scalar parameters and clustering them univariately is naive, though effective in these examples. Considering all pairwise constraints \citep{tibshirani2011}  can extend to multivariate parameter cases. This would create a surrogate adjacency matrix and corresponding surrogate Laplacian matrix, inducing a graph. The number of components in the graph represents the number of free parameters and this may be incorporated into the objective through (surrogate) eigenvalues of the Laplacian. Finally a novel clustering approach was induced by our method that performed well by explicitly overparameterizing. This obviates discretely fitting many different models with varying number of components or constraints. We may explore directly maximizing objectives that asymptotically match CV through this approach, as in a generalized Information Criterion. Finally, we plan to explore discontinuous posteriors by incorporating the work of \citet{cornish1937} who heavily use the \lit,  which may allow us to circumvent the expense of sampling in Bayesian analyses.

	
	\setlength{\bibsep}{0pt} 


	\bibliographystyle{elsarticle-harv}
	\bibliography{../../../../references} 
	\newpage


	\appendix
	\label{appendix}
	
	\section{Appendix}
	\subsection{Terminology}\label{app:definitions}
		The following standard definitions are found in common textbooks on real analysis.  
	\begin{itemize}
		\item A \textbf{compact} set is one that is closed and bounded.
		\item The \textbf{interior} of set $\mathcal X$ is the largest open set in $\mathcal{X}$. In the univariate case if  $\mathcal{X}=[a,b]$ the interior of $\mathcal{X}$ is $\mathcal{X}^\circ=(a,b)$.
		\item The scalars $z_k$ \textbf{converge} to $z$, denoted $z_k\kinfty z$,  if $\forall \epsilon>0, \exists N \st  \forall k>N$ it follows that$|z_k-z|<\epsilon$. We do very slightly modify the usual definition to not restrict $k$ and $N$ to be integers, but only require unambiguous ordering so the indices may not be integers but sequence terms remain unambiguously ordered. The bridge between the two definitions is an implied sequence indexed by $i\in \mathbb N$ where $i\mapsto k_i \mapsto z_{k_i}$ for some fixed sequence of $\{k_i\}\subset \mathbb R$. In our work the $i$ is implicit and we directly use $k=k_i\in \mathbb R$.

		\item For $k>0$, the functions $g_k(x)$  \textbf{converges pointwise} to $g(x)$ in $k$ on domain $\mathcal X$, denoted as $g_k(x)~\pwk~g(x)$,  if $\forall x\in \mathcal X, g_k(x)\kinfty g(x)$, or without relying on the usual scalar convergence definition, $	
		\forall \epsilon>0, \exists N \st \forall k>N $ it follows that $|g_k(x)-g(x)|<\epsilon$. 
		
		\item A function $g(x)$ on domain $\mathcal{X}$ is \textbf{smooth} (or $C^{m}$ smooth) on $\Theta$ if for $n \in \{0,1,\dots,m\}$ when  $\ddn{n}{g(x)}$ is continuous $\forall x\in \mathcal{X}$. Let the $n=0$ case refer to $g(x)$ itself. Note that $C^{n+1}\implies C^n$. The notation for this is $g(x)\in C^m(\mathcal X)$. 
	\end{itemize}
	\subsection{Lagrange Finite Approximation}\label{app:finite}
		\begin{corollary}[Finite Approximation]\label{cor:finite}
		Let $f(x)$ represent the same smooth function as in the 
		\lit with root at $x=A$.
		Let $h_q(0)$ represent the $m^{\text{th}}$ order Taylor expansion of $g(y)=f^{-1}(y)$ about $y=f(a)$  evaluated at $y=0$.Then $	\displaystyle \lim_{m\to\infty}h_q(0)=A$.
	\end{corollary}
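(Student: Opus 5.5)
The plan is to reduce the corollary to convergence of a Taylor series for the inverse function $g=f^{-1}$, evaluated at the single point $y=0$. First I would make $g$ well defined on the relevant range. By the hypothesis of \thmref{lit}, $f'(a)\neq 0$. I would strengthen this to $f'\neq 0$ on the closed segment between $a$ and $A$. Then $f$ is strictly monotone there, so $g$ exists on the segment between $f(a)$ and $f(A)=0$ and is $C^\infty$ by the inverse function theorem. Its derivatives are given recursively by $g'(y)=1/f'(g(y))$, $g''(y)=-f''(g(y))/f'(g(y))^3$, and so on. These are exactly the coefficients appearing in \thmref{lit}, so $h_q(0)$ is precisely the $m^{\text{th}}$ partial sum of the series displayed there.

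Second, I would write the remainder in Lagrange form:
\begin{align*}
A-h_q(0)=g(0)-h_q(0)=\frac{g^{(m+1)}(\xi_m)}{(m+1)!}\,\bigl(-f(a)\bigr)^{m+1},
\end{align*}
for some $\xi_m$ between $0$ and $f(a)$. The whole task is to show that this tends to $0$ as $m\to\infty$. A uniform bound of the form $|g^{(n)}(y)|\le C\,n!\,R^{-n}$ on that segment, with $|f(a)|<R$, gives a remainder of order $C(|f(a)|/R)^{m+1}\to 0$, which would finish the proof.

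Third, to obtain such a Cauchy-type bound I would pass to the complex plane. Assume $f$ is real analytic near the segment from $a$ to $A$. Then the holomorphic inverse function theorem makes $g$ holomorphic on a complex neighbourhood of the segment $[0,f(a)]$. Choosing a disk centred at $f(a)$ that contains $0$ and lies in this domain, Cauchy's estimates give the required bound on $g^{(n)}$. Equivalently, the Taylor series of $g$ about $f(a)$ converges at $0$. This is the classical Lagrange--B\"urmann setting, and the coefficient formula can be cross-checked via the residue representation $g^{(n)}(f(a))=\frac{1}{n}\,\frac{d^{\,n-1}}{dx^{n-1}}\!\left[\left(\frac{x-a}{f(x)-f(a)}\right)^{n}\right]_{x=a}$.

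The main obstacle is that the statement as written only assumes $f\in C^\infty$, and this is not sufficient. A $C^\infty$ inverse can have a Taylor series that converges to the wrong value, or does not converge at all; functions built from $e^{-1/x^2}$ show this. So the key step, and the one I expect to require care, is to state honestly an extra hypothesis. I would take $f$ analytic with $f'\neq 0$ between $a$ and $A$, and $a$ close enough to $A$ that $|f(a)|$ is below the radius of convergence of $g$ at $f(a)$. I would then show that under this hypothesis the remainder bound above closes the argument. As a fallback, I would note that for a fixed finite $m$ the truncated sum is still an $O(|f(a)|^{m+1})$-accurate approximation as $a\to A$, which is all the descent-style iteration in the main method actually uses.
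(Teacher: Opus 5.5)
Your route differs from the paper's, and yours is the one that works. The paper takes the Peano remainder $R_q(0)=o(|f(a)|^q)$ and rescales $f$ by a constant $c_a$ so that $|c_af(a)|<1$. It then infers $|f(a)|^q\to0$ and concludes $h_q(0)\to A$ for every $a$. This does not prove the limit, for three reasons:
\begin{enumerate*}[label=(\roman*)]
\item the $o(\cdot)$ bound is asymptotic as $y\to f(a)$ for each fixed order, with no uniformity in the order, so it says nothing about letting the order grow at the fixed point $y=0$;
\item replacing $f$ by $c_af$ replaces $g(y)$ by $g(y/c_a)$, which leaves every Taylor polynomial evaluated at $0$ unchanged (the $i$th term picks up $c_a^{-i}c_a^{i}$), so the rescaling cannot affect convergence;
\item $|c_af(a)|^q\to0$ does not imply $|f(a)|^q\to0$ anyway.
\end{enumerate*}

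Your diagnosis that $C^\infty$ is not enough is right. Take $g(y)=y+\psi(y)$ with $\psi$ a bump, $|\psi'|<1$, $\psi(0)\neq0$, vanishing near some $y_0$, and set $a=g(y_0)$. Then $f=g^{-1}$ is $C^\infty$ with $f'>0$ and $A=\psi(0)$, yet every Taylor polynomial of order $\geq1$ about $f(a)=y_0$ gives $h_q(0)=0$.

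Your requirement that $a$ be close to $A$ is also necessary. Take $g(y)=y+\arctan((y-5)/\delta)$ with $\delta<5$, and $a=5$, so $f(a)=5$. Then $f=g^{-1}$ is real analytic with $f'>0$ everywhere, but the series about $5$ has radius $\delta$ and diverges at $0$.

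What the paper's computation legitimately gives is essentially your fallback: fixed-order accuracy as $a\to A$. The paper's argument, if valid, would give a result with no extra hypotheses, but that result is false. Yours gives a true statement under hypotheses that are genuinely needed.

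Two repairs are needed inside your argument.

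\textbf{The uniform Cauchy bound.} Cauchy's estimates on a disk $D(f(a),r)$ control $g^{(n)}(\xi)$ only with radius about $r-|\xi-f(a)|$, which is about $r-|f(a)|$ near $\xi=0$. So the uniform bound your Lagrange-remainder step needs requires $r>2|f(a)|$, not merely a disk containing $0$. It is not equivalent to convergence of the series at $0$.
\begin{itemize}
\item Under \emph{$a$ close to $A$} this is fine: if $g$ is holomorphic on $D(0,r_0)$ and $|f(a)|<r_0/2$, take $|f(a)|<R<r_0-|f(a)|$.
\item Under the weaker hypothesis \emph{$|f(a)|$ below the radius of convergence at $f(a)$} it can fail. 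For $g(y)=y+\arctan(y/\delta)$ with $\delta<1$ and $f(a)=1$, the series about $1$ has radius $\sqrt{1+\delta^2}>1$ and converges at $0$. But the radius at $0$ is $\delta$, so no bound $Cn!R^{-n}$ with $R>|f(a)|$ holds at $\xi=0$.
\end{itemize}
In that case drop the Lagrange form. Cauchy's estimate at the centre gives $|g^{(n)}(f(a))|\le Mn!\,r^{-n}$ with $|f(a)|<r$, so the series converges at $0$. Its sum is $g(0)=A$ by the identity theorem along the segment; alternatively, use the Cauchy-integral form of the remainder.

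\textbf{The cross-check formula.} It has a spurious $1/n$. With $\phi(x)=(x-a)/(f(x)-f(a))$, the correct formula is $g^{(n)}(f(a))=\frac{d^{n-1}}{dx^{n-1}}\bigl[\phi(x)^n\bigr]_{x=a}$; for $n=2$ this gives $-f''(a)/f'(a)^3$. The $1/n$ belongs only to the coefficient form $g^{(n)}(f(a))/n!=\frac1n\cdot\frac{1}{(n-1)!}\frac{d^{n-1}}{dx^{n-1}}\bigl[\phi(x)^n\bigr]_{x=a}$.

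Also, the series displayed in \thmref{lit} is not literally the Taylor series of $g$: its $i$th coefficient carries $(i!)^{i}$ and $x$-derivatives of $1/f_1$. So identify $h_q(0)$ with $\sum_{i\le m}g^{(i)}(f(a))(-f(a))^i/i!$ directly, as the corollary's definition allows, rather than claiming it matches that display.
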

	\begin{proof} The result is a direct application of the Taylor Remainder theorem. 
		Since, $h_q(0)= a+\sum_{i=1}^{q} \left(-\frac{1}{i!}\frac{f(a)}{f_1(a)}\right)^i \ddna{i}{\frac{1}{f_1(x)}}$ from from the \lit, The Taylor Remainder Theorem gives
		\begin{align*}
			R_q(y)&=o(|y-b|^q)= o(|y-f(a))|^q)\\
			\implies\quad  R_q(0)&=o(|f(a)|^q)\\
			\implies\quad \qquad  A &= g(0) = h_q(0) + o(|f(a)|^q) 
		\end{align*}
		As $f(a)$ is meant to be close to zero, an arbitrary constant $c_a$ may be applied to $f$ to ensure that $|c_af(a)|<1$. $|c_af(a)|^q \qinfty0$ which results in $|f(a)|^q\qinfty 0$. Thus $h_q(0)\qinfty A$ for all~$a$.
	\end{proof}
	
\end{document}